\theoremstyle{definition}
\newtheorem{definition}{Definition}
\newtheorem{example}{Example}
\newtheorem{theorem}{Theorem}
\newtheorem{lemma}{Lemma}
\newtheorem{proposition}{Proposition}
\newtheorem{corollary}{Corollary}
\newtheorem*{priorwork*}{Prior Work}
\title{On the Conditional Logic of Simulation Models\footnote{In \emph{Proceedings of the 27th International Joint Conference on Artificial Intelligence} (IJCAI 2018).}}
\author{
Duligur Ibeling$^1$, 
Thomas Icard$^2$,
\\
$^1$ Department of Computer Science, Stanford University \\
$^2$ Department of Philosophy, Stanford University \\
duligur@stanford.edu,
icard@stanford.edu
}
\date{}
\let\ACMmaketitle=\maketitle
\renewcommand{\maketitle}{\begingroup\let\footnote=\thanks \ACMmaketitle\endgroup}
\begin{document}

\maketitle

\begin{abstract}
We propose analyzing conditional reasoning by appeal to a notion of intervention on a simulation program,
formalizing and subsuming a number of approaches to conditional thinking in the recent AI literature.
Our main results include a series of axiomatizations, allowing comparison between this framework and existing frameworks (normality-ordering models, causal structural equation models), and a complexity result establishing $\mathsf{NP}$-completeness of the satisfiability problem.
Perhaps surprisingly, some of the basic logical principles common to all existing approaches are invalidated in our causal simulation approach. We suggest that this additional flexibility is important in modeling some intuitive examples.
\end{abstract}


\section{Introduction and Motivation}
Much of intelligent action and reasoning involves assessing what \emph{would} occur (or \emph{would have} occurred)
under various non-actual conditions. Such hypothetical and counterfactual (broadly, \emph{subjunctive}) conditionals are bound up with central topics in artificial intelligence, including prediction, explanation, causal reasoning, and decision making. It is thus for good reason that AI researchers have focused a great deal of attention on conditional reasoning (see, e.g., \cite{Ginsberg,Delgrande,FriedmanKollerHalpern,Pearl2009,Bottou}, among many others).

Two broad approaches to subjunctive conditionals have been especially salient in the literature. The first, originating in philosophy \cite{Stalnaker,lewis73}, takes as basic a ``similarity'' or ``normality'' ordering on possibilities, and evaluates a claim `if $\varphi$ then $\psi$' by asking whether $\psi$ is true in (e.g., all)
the most normal $\varphi$ possibilities. The second approach, associated with the work of Judea Pearl, takes as basic a causal
``structural equation'' model (SEM), and evaluates conditionals according to a defined notion of \emph{intervention} on the
model. These two approaches are in some technical and conceptual respects compatible \cite{Pearl2009}, though they can also be shown to conflict on some basic logical matters \cite{Halpern2013}. Both capture important intuitions about conditional reasoning, and both have enjoyed successful applications in AI research.

In this article we propose a third approach to conditionals, which captures a different intuition, and which can already be seen as implicit in a growing body of work in AI, as well as in cognitive science. This approach takes as basic the notion of a \emph{simulation model}, that is, a \emph{program} for simulating the transformation from one state of the world to another, or for \emph{building up} or  
\emph{generating} a world from a partial description of it. Simulation models have been of interest since the earliest days of AI \cite{NewellSimon}. A recent tradition, coming out of work on statistical relational models, has proposed building complex generative models using rich and expressive programming languages, typically also incorporating probability (e.g., \cite{Pfeffer,BLOG,goodmanetal08,deRaedt}). Such languages have also been used for modeling human reasoning, including with counterfactuals \cite{Goodman2014}.

Simulation models have an obvious causal (and more general \emph{dependence}) structure, and it is natural to link conditionals with this very structure. We can assess a claim `if $\varphi$ then $\psi$' by \emph{intervening} on the program to ensure that $\varphi$ holds true throughout the simulation, and asking whether $\psi$ holds upon termination. This is conceptually different from the role of intervention in structural equation models, where the post-intervention operation is to find solutions to the manipulated system of equations. As we shall see, this conceptual difference has fundamental logical ramifications. 

This more procedural way of thinking about subjunctive conditionals enjoys various advantages. First, there is empirical evidence suggesting that human causal and conditional reasoning is closely tied to mental simulation \cite{sloman05}. Second, there are many independent reasons to build generative models in AI (e.g., minimizing prediction error in classification; see \cite{LiangJordan}), making them a common tool. Thus, opportunistically, we can expect to have such models readily available (perhaps unlike normality orderings or even structural equation models).

Related to this second point, many of the generative models that are currently being built using deep neural networks fit neatly into our approach, even though we can often only use them as black boxes (see, e.g., \cite{cgan,CausalGAN}, etc.). We know how to intervene on these programs (i.e., controlling input), and how to read off a result or prediction---that is, we can observe what conditional claims they embody---even though we may not understand all the causal details of the learned model. Some authors have recently argued that certain kinds of counterfactual analysis in particular establish an appropriate standard for \emph{interpretability} for these models \cite{wachter}.

Our contribution in this article is threefold: (1) we propose a general semantic analysis of conditional claims in terms of program executions, subsuming all the aforementioned application areas; (2) we establish completeness theorems for a propositional conditional language with respect to (four different classes of) programs, allowing a comparison with alternative approaches at a fundamental logical level; (3) we establish $\mathsf{NP}$-completeness of the satisfiability problem for these logical systems. Before turning to these details, we explain informally what is distinctive about the resulting logic. 

\section{Conditional Logics} \label{logics} 

The literature on conditional logic is extensive. We focus here on the most notable differences between the systems below and more familiar systems based on either world-orderings or SEMs. We will be using a notation inspired by dynamic logic (also used by \cite{Halpern2000}), whereby $[\alpha]\beta$ can loosely be read as, `if $\alpha$ were true, then $\beta$ would be true.' Understanding the complete logic of a given interpretation can be of both theoretical and practical interest. In the causal setting, for instance, a complete set of axioms may give the exact conditions under which some counterfactual quantity is (not) identifiable from statistical data \cite{Pearl2009}.  

One of the bedrock principles of conditional reasoning is called \emph{Cautious Monotonicity} \cite{Kraus1990}, or sometimes the \emph{Composition} rule \cite{Pearl2009}. This says that from $[A](B \wedge C)$ we may always infer $[A \wedge B]C$. While there are known counterexamples to it in the literature---it fails for some probabilistic and possibilistic interpretations \cite{DuboisPrade} and in standard versions of default logic \cite{Makinson}---the principle is foundational to both world-ordering models and SEMs.
By contrast, in our setting, holding $B$ fixed during the simulation may interrupt the sequence of steps leading to $C$ being made true. Here is a simple example (taken from \cite{AC17}):
\begin{example} If Alf were ever in trouble ($A$), the neighbors Bea and Cam would both like to help ($B$ and $C$, respectively). But neither wants to help if the other is already helping. Imagine the following scenario: upon finding out that Alf is in trouble, each looks to see if the other is already there to help. If not, then each begins to prepare to help, eventually making their way to Alf but never stopping again to see if the other is doing the same. If instead, e.g., Cam initially sees Bea already going to help, Cam will not go. One might then argue that the following both truly describe the situation: `If Alf were in trouble, Bea and Cam would both go to help' and
`If Alf were in trouble and Bea were going to help, Cam would not go to help'.
\label{ex1}
\end{example}
The example trades on a temporal ambiguity about when Bea is going to help, and it can be blocked  simply by time-indexing variables. However, following a common stance in the literature \cite{Halpern2000,Pearl2009}, we maintain that  requiring temporal information always be made explicit is excessively stringent. Furthermore, in line with our earlier remarks about black box models, we may often be in a situation where we simply do not understand the internal temporal and causal structure of the program. To take a simple example, asking a generative image model to produce a cityscape might result in images with clouds and blue skies, even though a request to produce a cityscape with a blue sky might not result in any clouds. We would like a framework that can accommodate conditional theories embodied in artifacts like these.

Our completeness results below (Thm. \ref{ax-soundcomplete}) show that the logic of conditional simulation is strictly weaker than any logic of structural equation models (as established in \cite{Halpern2000}) or of normality orderings (as, e.g., in \cite{lewis73}). The conditional logic of \emph{all} programs is very weak indeed. At the same time, some of the axioms in these frameworks can be recovered by restricting the class of programs (e.g., the principle of \emph{Conditional Excluded Middle}, valid on structural equation models and on some world-ordering models \cite{Stalnaker}, follows from optional axiom $\textsf{F}$ below). We view this additional flexibility as a feature. However, even for a reader who is not convinced of this, we submit that understanding the logic of this increasingly popular way of thinking about conditional information is valuable.


\begin{priorwork*}
The notion of intervention introduced below (Defn. \ref{intervention}) is different from, but inspired by, the corresponding notion in SEMs \cite{meekglymour94,Pearl2009}. The logical language we study in this paper, restricting antecedents to conjunctive clauses but closing off under Boolean connectives, follows \cite{Halpern2000}. 

Interestingly, prior to any of this work, \cite{Balkenius} studied conditionals interpreted specifically over certain classes of neural networks, using a definition of ``clamping a node'' similar to our notion of intervention. They also observed that some of the core principles of non-monotonic logic fail for that setting. (See in addition \cite{Leitgeb} for further development of related ideas.)
\end{priorwork*}

\section{Syntax}
Let $X$ be a set of atoms $X_1, X_2, X_3, \dots$ and
let $\mathcal{L}_{\mathrm{prop}}$ be the language of propositional formulas over atoms in $X$
closed under disjunction, conjunction, and negation. Let $\mathcal{L}_{\mathrm{int}} \subset \mathcal{L}_{\mathrm{prop}}$
be the language of
purely conjunctive, ordered formulas of unique literals, i.e., formulas of the form $l_{i_1} \land \dots \land l_{i_n}$, where $i_j < i_{j+1}$ and each $l_{i_j}$ is either $X_{i_j}$ or $\lnot X_{i_j}$.
Each formula in $\mathcal{L}_{\mathrm{int}}$ will specify an \emph{intervention} by giving fixed values for a fixed list of variables.
We also include the ``empty'' intervention $\top$ in $\mathcal{L}_{\mathrm{int}}$. Given $\varphi \in \mathcal{L}_{\mathrm{prop}}$,
$\varphi' \in \mathcal{L}_{\mathrm{int}}$ is the \emph{$\mathcal{L}_{\mathrm{int}}$-equivalent}
of $\varphi$
if $\varphi$ is a propositionally consistent, purely conjunctive formula over literals and $\varphi'$ results from a reordering
of literals and deletion of repeated literals
in $\varphi$. For example, the $\mathcal{L}_{\mathrm{int}}$-equivalent of $\lnot X_2 \land X_1 \land X_1$ is $X_1 \land \lnot X_2$.
Let $\mathcal{L}_{\mathrm{cond}}$ be the language of formulas
of the form $[\alpha] \beta$ for $\alpha \in \mathcal{L}_{\mathrm{int}}, \beta \in \mathcal{L}_{\mathrm{prop}}$. We call such a formula a
\emph{subjunctive conditional}, and call $\alpha$
the \emph{antecedent} and $\beta$ the \emph{consequent}.
The overall causal simulation language $\mathcal{L}$ is the language of propositional
formulas over atoms in $X \cup \mathcal{L}_{\mathrm{cond}}$
closed under disjunction, conjunction, and negation.
For $\alpha, \beta \in \mathcal{L}$,
$\alpha \rightarrow \beta$ abbreviates
$\lnot \alpha \lor \beta$,
and $\alpha \leftrightarrow \beta$ denotes
$(\alpha \rightarrow \beta) \land (\beta \rightarrow \alpha)$.
We use $\langle \alpha \rangle$ for the dual of $[\alpha]$,
i.e., $\langle \alpha \rangle \beta$ abbreviates $\lnot [\alpha] (\lnot \beta)$.

\section{Semantics}
We now define the semantics of $\mathcal{L}$ over causal simulation models.
A \emph{causal simulation model} is
a pair $(\mathsf{T}, \mathbf{x})$ of a Turing machine $\mathsf{T}$
and tape contents represented by a \emph{state description}
$\mathbf{x} = \{x_n\}_{n \in \mathbb{N}}$, which
specifies binary\footnote{The present setting can be easily generalized to the arbitrary discrete setting, indeed without changing the logic. See \cite{AC17}.} values for all tape variables,
only finitely many of which can be nonzero.
Running $\mathsf{T}$ on input $\mathbf{x}$ yields a new
state description $\mathbf{x}'$ as output, provided the execution halts.
We say $\mathbf{x} \models X_i$ iff $x_i = 1$ in $\mathbf{x}$. Satisfaction
$\mathbf{x} \models \varphi$ of $\varphi \in \mathcal{L}_{\mathrm{prop}}$
is then defined in the familiar way by recursion.
For $X$-atoms
we define
$(\mathsf{T}, \mathbf{x}) \models X_i$ iff $\mathbf{x} \models X_i$.
Toward a definition of satisfaction for subjunctive conditionals,
we now define an \emph{intervention} (in the same way as in \cite{AC17}):
\begin{definition}[Intervention]
An intervention $\mathcal{I}$ is a computable function mapping a machine $\mathsf{T}$ to a new
machine $\mathcal{I}(\mathsf{T})$ by taking a set of values $\{x_i\}_{i \in I}$,
$I \subseteq \mathbb{N}$ a finite index set,
and holding fixed the value of each $X_i$ to $x_i$ throughout the execution of $\mathsf{T}$.
That is, $\mathcal{I}(\mathsf{T})$ first sets each $X_i$ to $x_i$, then runs $\mathsf{T}$
while ignoring any write to any $X_i$.
\label{intervention}
\end{definition}
Any $\alpha \in \mathcal{L}_{\mathrm{int}}$ uniquely specifies an intervention, which
we denote as $\mathcal{I}_{\alpha}$:
each literal in $\alpha$ gives a tape variable to hold fixed, and the literal's polarity tells us to which value it is to be fixed.
Now we define $(\mathsf{T}, \mathbf{x}) \models [\alpha]\beta$
iff for all halting executions of $\mathcal{I}_\alpha(\mathsf{T})$
on $\mathbf{x}$, the resulting tape
satisfies $\beta$.
Note that for deterministic machines, this means either $\mathcal{I}_\alpha(\mathsf{T})$
does not halt on $\mathbf{x}$, or the unique resulting tape satisfies $\beta$.
The definition also implies that $(\mathsf{T}, \mathbf{x}) \models \langle \alpha \rangle \beta$
iff {there exists} a halting execution of $\mathcal{I}_\alpha(\mathsf{T})$
on $\mathbf{x}$ whose result satisfies $\beta$.
Having now defined $(\mathsf{T}, \mathbf{x}) \models \varphi$ for atoms $\varphi \in X \cup \mathcal{L}_{\mathrm{cond}}$,
$(\mathsf{T}, \mathbf{x}) \models \varphi$
for complex $\varphi \in \mathcal{L}$ is defined by recursion.

Interestingly, as revealed by Prop. \ref{model-checking}, model checking in this setting is difficult, while satisfiability (or validity) for notable classes of machines is decidable (Thm. \ref{complexity}).

\begin{proposition} If $\alpha \wedge \beta$ is propositionally consistent,
then it is undecidable whether $(\mathsf{T},\mathbf{x}) \models \langle \alpha \rangle \beta$.
\label{model-checking}
\end{proposition}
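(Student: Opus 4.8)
The plan is to reduce the halting problem to this model-checking question. Fix $\alpha$ and $\beta$ with $\alpha \wedge \beta$ propositionally consistent; I want to show that from an arbitrary Turing machine $M$ (say, the question of whether $M$ halts on the empty input) I can compute a causal simulation model $(\mathsf{T}, \mathbf{x})$ such that $(\mathsf{T},\mathbf{x}) \models \langle \alpha \rangle \beta$ iff $M$ halts. Since $\langle \alpha \rangle \beta$ holds iff \emph{some} halting execution of $\mathcal{I}_\alpha(\mathsf{T})$ on $\mathbf{x}$ ends with the tape satisfying $\beta$, the strategy is to make $\mathsf{T}$ a machine that (i) first simulates $M$, and (ii) only after $M$ halts, writes out a tape configuration witnessing $\beta$. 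If $M$ never halts, $\mathsf{T}$ never halts either, so there is no halting execution and $\langle \alpha \rangle \beta$ fails; if $M$ halts, $\mathsf{T}$ goes on to produce a $\beta$-satisfying tape, so $\langle \alpha \rangle \beta$ holds.

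The steps, in order. First, I would build $\mathsf{T}$ explicitly: reserve a block of ``scratch'' tape cells disjoint from the finitely many indices mentioned in $\alpha$ and $\beta$ for simulating $M$; have $\mathsf{T}$ run that simulation; upon detecting that the simulated $M$ has entered a halting state, have $\mathsf{T}$ erase its scratch cells and write down a fixed finite assignment to the variables occurring in $\beta$ that makes $\beta$ true — such an assignment exists precisely because $\alpha \wedge \beta$ is propositionally consistent, and in fact I can choose it to be consistent with $\alpha$ so that the intervention $\mathcal{I}_\alpha$ does not clobber it. Second, I would choose $\mathbf{x}$ to be, say, the all-zero state description (finitely-nonzero, as required). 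Third, I would verify the two directions of the equivalence: if $M$ halts, then $\mathcal{I}_\alpha(\mathsf{T})$ on $\mathbf{x}$ halts with a tape agreeing with $\alpha$ (forced by the intervention) and with the chosen $\beta$-witness on the other relevant cells, hence satisfying $\beta$, so $\langle\alpha\rangle\beta$ holds; conversely if $M$ does not halt, the simulation phase of $\mathsf{T}$ runs forever, $\mathcal{I}_\alpha$ only freezes finitely many cells and does not interfere with the simulation (again by the disjointness of indices), so $\mathcal{I}_\alpha(\mathsf{T})$ has no halting execution, and $\langle\alpha\rangle\beta$ fails. Since $\mathsf{T}$ and $\mathbf{x}$ are computable from $M$, this is a many-one reduction from the halting problem, giving undecidability.

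The main obstacle — really the only subtlety — is the bookkeeping around the intervention: I must ensure that $\mathcal{I}_\alpha$ neither destroys the $M$-simulation (solved by keeping the scratch cells disjoint from the indices in $\alpha$) nor prevents $\beta$ from being satisfied at the end (solved by picking the $\beta$-witnessing assignment to be compatible with $\alpha$, which is exactly what propositional consistency of $\alpha \wedge \beta$ buys us, and by having $\mathsf{T}$ write $\beta$'s witness only on cells not frozen by $\alpha$). One should also note that the ``empty intervention'' case $\alpha = \top$ is included and causes no difficulty, since then $\mathcal{I}_\alpha(\mathsf{T}) = \mathsf{T}$ and the argument goes through verbatim. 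Everything else is routine. (The same construction, negated, shows it is also undecidable whether $(\mathsf{T},\mathbf{x}) \models [\alpha]\lnot\beta$, but the stated form suffices.)
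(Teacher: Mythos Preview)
Your proof is correct and takes a genuinely different route from the paper. The paper's sketch invokes the Rice--Myhill--Shapiro Theorem: it argues that the class $\mathcal{T}_\alpha = \{\mathcal{I}_\alpha(\mathsf{T}) : \mathsf{T} \in \mathcal{T}\}$ still indexes all partial recursive functions (under a suitable encoding that avoids the frozen cells), and that the set $H_\beta$ of machines halting with output satisfying $\beta$ is a nontrivial extensional property of this index set, hence undecidable. You instead give a direct many-one reduction from the halting problem, building for each $M$ a specific $\mathsf{T}$ that simulates $M$ on scratch cells disjoint from $\alpha,\beta$ and only then writes a $\beta$-witness compatible with $\alpha$. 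Your argument is more elementary and fully self-contained (no appeal to Rice), and makes explicit the one place where the consistency hypothesis on $\alpha \wedge \beta$ is actually used---namely, to ensure the frozen cells do not block the final $\beta$-witness. The paper's approach is more compact and, via Rice, would immediately yield undecidability for a broader family of semantic properties beyond the particular $\langle\alpha\rangle\beta$ form; your approach trades that generality for concreteness and the careful handling of the intervention bookkeeping, which the paper leaves implicit in the phrase ``under a suitable encoding.''
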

\begin{proof}[Proof Sketch]
Under a suitable encoding of natural numbers on the variable tape, the class $\mathcal{T}_{\alpha} = \{\mathcal{I}_{\alpha}(\mathsf{T}): \mathsf{T} \in \mathcal{T}\}$, where $\mathcal{T}$ is the class of all machines,
gives an enumerable list of all the partial recursive functions, with $\mathsf{T}$ computably recoverable from $\mathsf{T}' \in \mathcal{T}_{\alpha}$. Moreover, $H_{\beta} = \{\mathsf{T} \in \mathcal{T}_{\alpha}: \mathsf{T}$ halts on input $\mathbf{x}$ with output $\mathbf{x}' \models \beta\}$ is extensional and $\varnothing \subsetneq H_{\beta} \subsetneq \mathcal{T}_{\alpha}$, so by the Rice-Myhill-Shapiro Theorem it is undecidable. If we could decide whether $(\mathsf{T},\mathbf{x}) \models \langle \alpha \rangle \beta$, this would allow us to decide whether $\mathsf{T}' = \mathcal{I}_{\alpha}(\mathsf{T}) \in H_{\beta}$.
\end{proof}

A second limitative result is that we cannot have strong completeness (that is, completeness relative to arbitrary sets of assumptions), since by Prop. \ref{noncompact} we do not have compactness. On the other hand, our axiom systems (Defn. \ref{systems}) are weakly complete (complete relative to finite assumption sets).
\begin{proposition} The language $\mathcal{L}$ interpreted over causal simulation models is not compact.
\label{noncompact}
\end{proposition}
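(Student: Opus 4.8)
The plan is to produce a countable set $\Gamma \subseteq \mathcal{L}$ that is finitely satisfiable but has no model; non-compactness then follows by definition, and with it the impossibility of strong completeness noted above. The cleanest such $\Gamma$ exploits the fact that a state description $\mathbf{x} = \{x_n\}_{n \in \mathbb{N}}$ is required to have only finitely many nonzero entries. Take $\Gamma = \{X_n : n \in \mathbb{N}\}$. For the finite-satisfiability direction, given a finite $\Delta \subseteq \Gamma$, let $N$ bound the indices occurring in $\Delta$, let $\mathbf{x}$ be the state description with $x_1 = \dots = x_N = 1$ and $x_m = 0$ for $m > N$ (a legitimate state description, as its support is finite), and pair it with any machine $\mathsf{T}$, say one that halts immediately; then $(\mathsf{T},\mathbf{x}) \models X_n$ for every $X_n \in \Delta$. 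For the other direction, any model $(\mathsf{T},\mathbf{x})$ with $(\mathsf{T},\mathbf{x}) \models \Gamma$ would have $x_n = 1$ for all $n$, contradicting the finite-support requirement. Hence $\Gamma$ is unsatisfiable, so $\mathcal{L}$ over causal simulation models is not compact.

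One might worry that this failure is a mere artifact of the propositional base and says nothing about the conditional apparatus, so I would also record a conditional-only witness. Let $\beta_n$ abbreviate $X_1 \wedge \dots \wedge X_n$ and put $\Gamma' = \{\langle \top \rangle \top\} \cup \{[\top]\beta_n : n \in \mathbb{N}\}$. The conjunct $\langle\top\rangle\top$ is essential: without it a machine that never halts would satisfy every $[\top]\beta_n$ vacuously. Any finite $\Delta \subseteq \Gamma'$ is satisfied by $(\mathsf{T},\mathbf{x})$ where $\mathbf{x}$ is all-zero and $\mathsf{T}$ on every input immediately writes $1$ to cells $1,\dots,N$ (with $N$ bounding the indices in $\Delta$) and halts: its unique halting execution witnesses $\langle\top\rangle\top$, and its output satisfies $\beta_m$ for every $m \le N$. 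But if $(\mathsf{T},\mathbf{x}) \models \Gamma'$, then by $\langle\top\rangle\top$ there is a halting execution of $\mathcal{I}_\top(\mathsf{T})$ — which is just $\mathsf{T}$ — on $\mathbf{x}$; its output tape $\mathbf{x}'$ has finite support, say contained in $\{1,\dots,k\}$, so $\mathbf{x}' \not\models \beta_{k+1}$, contradicting $[\top]\beta_{k+1}$. Thus $\Gamma'$ is finitely satisfiable but unsatisfiable as well.

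I do not expect a genuine obstacle here. The single point requiring care is to exploit the built-in finiteness — of state descriptions in the first argument, equivalently of the portion of tape touched by any halting run in the second — so that infinitely many atomic (or conditional) demands cannot be met simultaneously, while checking that each finite subset requests only a bounded, and hence realizable, amount of tape. A reader who wants the witness to reflect the computational content of these models could instead design $\Gamma$ from conditionals encoding that $\mathsf{T}$ agrees with larger and larger finite fragments of a non-computable set, leveraging the undecidability behind Prop.~\ref{model-checking}; but the elementary argument above already settles the proposition.
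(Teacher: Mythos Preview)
Your proof is correct, but it proceeds quite differently from the paper's. You exploit the finite-support constraint on state descriptions: your first witness $\Gamma = \{X_n : n \in \mathbb{N}\}$ is unsatisfiable simply because no $\mathbf{x}$ can have all entries equal to $1$, and your second witness $\Gamma'$ pushes the same finiteness observation to the output tape of a halting run. The paper instead builds a set $\Omega$ around an uncomputable total function $f$ with $f(n)\neq n$, using formulas of the shape $\neg X_n$, $\langle X_n\rangle X_{f(n)}$, and $[X_n]\neg X_m$ for $m\neq n, f(n)$; any model of $\Omega$ would let one compute $f$ by intervening on $X_n$ and reading off which $X_m$ turns on. Your argument is strictly more elementary and needs nothing about computability, whereas the paper's argument locates the failure of compactness in the computational content of the models rather than in a cardinality-style restriction on tapes --- a distinction you yourself flag in your closing remark. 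Both are valid proofs of the proposition; the paper's has the advantage that it would survive relaxing the finite-support requirement on state descriptions, while yours has the advantage of being immediate.
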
 \begin{proof} Let $f:\mathbb{N}\rightarrow\mathbb{N}$ be any uncomputable total function such that $f(n) \neq n$ for all $n$ and consider $\Omega = \{\neg X_n : n \in \mathbb{N}\} \cup \{\langle X_n \rangle X_{f(n)}: n \in \mathbb{N} \} \cup \{[X_n]\neg X_m: m, n \in \mathbb{N}$ with $m \neq n, m \neq f(n)\}$. If $(\mathsf{T},\mathbf{x})$ satisfies every $\varphi \in \Omega$, we could compute $f(n)$ by intervening to set $X_n$ to 1, and checking which other variable $X_m$ is set to 1. As $f$ is total and $f(n)\neq n$, we could always find such $m = f(n)$. So $\Omega$ is unsatisfiable.  But it is easily seen that every finite subset of $\Omega$ is satisfiable. 
\end{proof}

\section{Axiomatic Systems}
We will now identify axiomatic systems (Defn. \ref{systems}) that are \emph{sound} and \emph{complete} with respect to salient 
classes (Defn. \ref{classes}) of
causal simulation models, by which we mean that they prove all (completeness) and only (soundness) the \emph{generally valid principles} with respect to
those classes.

\begin{definition} Let $\mathcal{M}$ be the class of all causal simulation models $(\mathsf{T}, \mathbf{x})$, where $\mathsf{T}$ may be non-deterministic. Let $\mathcal{M}_{\mathrm{det}}$ be the class of models with deterministic $\mathsf{T}$, and let $\mathcal{M}^{\downarrow}$ be the class of models with non-deterministic $\mathsf{T}$ that halt on all input tapes and interventions. 
Also let $\mathcal{M}^{\downarrow}_{\mathrm{det}} = \mathcal{M}_{\mathrm{det}} \cap
\mathcal{M}^{\downarrow}$.
\label{classes}
\end{definition} 

\begin{definition} Below are two rules and four axioms.\footnote{
We use the standard names from modal and non-monotonic logic.
The \emph{Left Equivalence} rule \cite{Kraus1990},
namely, infer $[\alpha] \beta \leftrightarrow [\alpha'] \beta$ from $\alpha \leftrightarrow \alpha'$, is not needed: since antecedents belong to $\mathcal{L}_{\mathrm{int}}$,
they are never distinguished beyond equivalence.
} \begin{eqnarray*}
 \textsf{PC}. && \mbox{Propositional calculus (over the atoms of $\mathcal{L}$)} \\
 \textsf{RW}.& & \mbox{From }\beta \rightarrow \beta'\mbox{ infer }[\alpha]\beta \rightarrow [\alpha]\beta' \\
  \textsf{R}. && [\alpha]\alpha \\
 \textsf{K}. && [\alpha](\beta \rightarrow \gamma) \rightarrow ([\alpha]\beta \rightarrow [\alpha]\gamma) \\ 
 \textsf{F}. && \langle \alpha\rangle \beta \rightarrow [\alpha]\beta \\
 \textsf{D}. && [\alpha]\beta \rightarrow \langle \alpha \rangle \beta 
\end{eqnarray*}
$\textsf{AX}$ denotes the system containing axioms \textsf{R} and \textsf{K} and
closed under \textsf{PC} and \textsf{RW}.
$\textsf{AX}_{\mathrm{det}}$ is $\textsf{AX}$ in addition to axiom
\textsf{F}, $\textsf{AX}^\downarrow$ is $\textsf{AX}$ in addition
to axiom \textsf{D}, and $\textsf{AX}^{\downarrow}_{\mathrm{det}}$ is the system combining
all of these axioms and rules.
\label{systems}
\end{definition}

For the remainder of this article,
fix $\mathcal{M}^{\dagger}$ to be one of the classes $\mathcal{M}$, $\mathcal{M}_{\mathrm{det}}$, $\mathcal{M}^{\downarrow}$, or $\mathcal{M}_{\mathrm{det}}^{\downarrow}$, and let $\textsf{AX}^{\dagger}$ be the respective deductive system of Defn. \ref{systems}. Then:
\begin{theorem}
$\textsf{AX}^{\dagger}$ is sound and complete
for validities
with respect to the class $\mathcal{M}^{\dagger}$.
\label{ax-soundcomplete}
\end{theorem}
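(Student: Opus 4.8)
The plan has the usual two halves, with all the work in the second. \emph{For soundness} I would check each rule and axiom directly against the reading of $[\alpha]$ as ``every halting execution of $\mathcal I_\alpha(\mathsf T)$ on $\mathbf x$ satisfies\dots'': \textsf{PC}, \textsf{K} and \textsf{RW} hold because, for a fixed $\alpha$, this is a normal box over the set of halting executions; \textsf{R} holds because $\mathcal I_\alpha(\mathsf T)$ sets and then freezes exactly the literals of $\alpha$, so every output tape satisfies $\alpha$. The class-specific axioms are where determinism/termination enter: \textsf{F} is sound on $\mathcal M_{\mathrm{det}}$ and $\mathcal M^\downarrow_{\mathrm{det}}$ since a deterministic $\mathcal I_\alpha(\mathsf T)$ has at most one halting run, so ``some halting run yields $\beta$'' and ``every halting run yields $\beta$'' coincide; \textsf{D} is sound on $\mathcal M^\downarrow$ and $\mathcal M^\downarrow_{\mathrm{det}}$ since there $\mathcal I_\alpha(\mathsf T)$ always has a halting run.

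\emph{For completeness} I would argue contrapositively by building a small, explicit model: given $\textsf{AX}^\dagger\not\vdash\varphi$, extend $\{\lnot\varphi\}$ to a maximal $\textsf{AX}^\dagger$-consistent set $\Gamma$ and produce $(\mathsf T,\mathbf x)\in\mathcal M^\dagger$ with $(\mathsf T,\mathbf x)\models\lnot\varphi$. Let $R$ be the finite set of atoms $X_i$ occurring in $\varphi$ and $A$ the finite set of antecedents occurring in $\varphi$ (add $\top$ to $A$); note every variable of every $\alpha\in A$ lies in $R$. Put $x_i=1$ iff $X_i\in\Gamma\cap R$, so $\mathbf x$ has finite support and tracks $\Gamma$ on $R$. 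For $\alpha\in A$ let $S_\alpha$ collect the literals of $\alpha$ together with every $\gamma$ — a literal over $R$, or a consequent occurring in $\varphi$ — with $[\alpha]\gamma\in\Gamma$. Using \textsf{R}, \textsf{K}, \textsf{RW} and maximality one shows: $S_\alpha$ is propositionally consistent iff $[\alpha]\bot\notin\Gamma$, and for any relevant consequent $\beta$, $S_\alpha\vdash_{\mathrm{PC}}\beta$ iff $[\alpha]\beta\in\Gamma$. When \textsf{D} is present, $\lnot[\alpha]\bot$ is provable (from $[\alpha]\top$, itself from \textsf{R} and \textsf{RW}, via \textsf{D}), so $S_\alpha$ is always consistent; when \textsf{F} is present, $[\alpha]X_i\in\Gamma$ or $[\alpha]\lnot X_i\in\Gamma$ for each $X_i\in R$, so a consistent $S_\alpha$ has a unique model $p_\alpha$ on $R$.

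Next I would realize, with a single machine $\mathsf T$, the following target behavior: for $\alpha\in A$, $\mathcal I_\alpha(\mathsf T)$ on $\mathbf x$ diverges if $[\alpha]\bot\in\Gamma$ (a branch used only for $\mathcal M$ and $\mathcal M_{\mathrm{det}}$), and otherwise halts on an $R$-assignment satisfying $S_\alpha$ — ranging nondeterministically over all of them for $\mathcal M$, $\mathcal M^\downarrow$, and deterministically outputting $p_\alpha$ for $\mathcal M_{\mathrm{det}}$, $\mathcal M^\downarrow_{\mathrm{det}}$ — with all other cells $0$ (consistent with the clamping, since $S_\alpha$ contains the literals of $\alpha$). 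The one nontrivial point is that $\mathsf T$ must recover which intervention is in force from the tape alone; it can: for each $X_i\in\bigcup_{\alpha\in A}\mathrm{Var}(\alpha)$, save $x_i$ to scratch cells, write $0$ and read back, then write $1$ and read back — the two read-backs reveal whether $X_i$ is frozen-to-$1$, frozen-to-$0$, or free. Distinct $\alpha\in A$ freeze either different variable sets or the same set to different values, hence induce different such ``signatures'', so $\mathsf T$ can consult a finite table (hard-wired from $\Gamma$) to identify the active $\alpha\in A$, if any, and act accordingly; on any other input it simply halts, keeping $\mathsf T$ total (so $\mathsf T\in\mathcal M^\downarrow$, resp.\ $\mathcal M^\downarrow_{\mathrm{det}}$, when required — and in those cases the divergence branch is never reached, by \textsf{D}). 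Determinism of $\mathsf T$ in the relevant cases is immediate from the construction.

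Finally, a truth lemma — induction on subformulas $\psi$ of $\varphi$ showing $(\mathsf T,\mathbf x)\models\psi$ iff $\psi\in\Gamma$ — closes the argument: $X$-atoms hold by the choice of $\mathbf x$; for a conditional $[\alpha]\beta$ occurring in $\varphi$, if $[\alpha]\bot\in\Gamma$ both sides hold for every $\beta$, and otherwise $(\mathsf T,\mathbf x)\models[\alpha]\beta$ iff every model of $S_\alpha$ satisfies $\beta$ iff $S_\alpha\vdash_{\mathrm{PC}}\beta$ iff $[\alpha]\beta\in\Gamma$ (in the deterministic case $p_\alpha$ is the unique model of $S_\alpha$, so the same equivalences hold); Boolean cases are routine. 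Since $\lnot\varphi\in\Gamma$ we conclude $(\mathsf T,\mathbf x)\not\models\varphi$. I expect the third paragraph to be the main obstacle: arguing that a Turing machine can detect an externally imposed clamping and thereby separate the finitely many relevant antecedents, and pinning down that the axioms present in $\textsf{AX}^\dagger$ are exactly what licenses the prescribed outputs within $\mathcal M^\dagger$ (consistency of $S_\alpha$ from $\lnot[\alpha]\bot$, i.e.\ \textsf{D}, in the terminating classes; uniqueness of the output from \textsf{F} in the deterministic ones).
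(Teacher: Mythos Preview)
Your proposal is correct and shares with the paper both the soundness argument and the crucial technical device for completeness: the machine detects which intervention is in force by attempting to toggle each relevant variable and observing whether the write sticks (the paper calls this \texttt{IsIntervened}), then dispatches on a hard-wired finite table. Where you diverge is in how the completeness half is organized. The paper does \emph{not} use a maximal consistent set. Instead it proves a normal-form lemma---every $\varphi$ is provably equivalent to a disjunction of clauses $\pi \wedge \bigwedge_i [\alpha_i]\bigvee_j \beta_j \wedge \bigwedge_k \langle\alpha_k\rangle\beta_k$ with $\beta_j,\beta_k\in\mathcal L_{\mathrm{int}}$---picks one consistent disjunct $\delta$, and extracts from it a \emph{selection function} $f_\delta(\alpha)\subseteq\mathcal L_{\mathrm{int}}$ assigning each antecedent a small, explicit set of target output descriptions. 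The machine then has one \texttt{choose}-branch per element of $f_\delta(\alpha)$ (or a \texttt{loop} if $f_\delta(\alpha)=\varnothing$). Your route is the textbook Henkin construction: extend $\{\lnot\varphi\}$ to a maximal consistent $\Gamma$, define $S_\alpha=\{\gamma:[\alpha]\gamma\in\Gamma\}$ (restricted to relevant $\gamma$), and let the machine range nondeterministically over \emph{all} $R$-models of $S_\alpha$; the truth lemma then follows from normality of $[\alpha]$ exactly as you sketch. Both are sound; what the paper's decomposition buys is a \emph{polynomial-size} canonical program---$|f_\delta(\alpha)|$ is bounded by the number of literals in the clause---which is precisely what is recycled in the proof of the $\mathsf{NP}$ upper bound (Theorem~\ref{complexity}). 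Your model may have up to $2^{|R|}$ branches per antecedent, harmless for weak completeness but insufficient for the complexity result. One small point to make explicit when you write it out: in the always-halting classes the machine must halt under \emph{every} intervention, including ones that clamp your scratch cells; choose the scratch indices above $N$ (as the paper does with $X_{i+N},X_{i+2N}$) and note that for $\alpha\in A$ these are never clamped, while for other interventions you only need termination, not any particular output.
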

\begin{proof}
The soundness of $\textsf{PC}$,
$\textsf{RW}$, $\textsf{R}$, and $\textsf{K}$
is straightforward.
If $\mathcal{M}^{\dagger}$ is $\mathcal{M}_{\mathrm{det}}$ (or $\mathcal{M}_{\mathrm{det}}^{\downarrow}$),
any $\mathfrak{M} \in \mathcal{M}^{\dagger}$
has at most one halting execution, so a property
holding of one execution holds of all and $\textsf{F}$ is sound.
If $\mathcal{M}^{\dagger}$ is $\mathcal{M}^{\downarrow}$ (or $\mathcal{M}_{\mathrm{det}}^{\downarrow}$),
then any $\mathfrak{M}$ has at least one halting execution,
so a property holding of all holds of one, and
$\textsf{D}$ is sound.

As for completeness,
it suffices to
show that any $\textsf{AX}^{\dagger}$-consistent $\varphi$ has a 
canonical
model $\mathfrak{M}_{\varphi} \in \mathcal{M}^{\dagger}$ satisfying it.
Working toward the construction of $\mathfrak{M}_{\varphi}$,
we prove a normal form result (Lem. \ref{normal-form})
that elucidates what is required in order to satisfy $\varphi$ (Lem. \ref{norm-sf-sat}).
We then define simple
programming
languages (Defn. \ref{pl})---easily seen to be translatable into
Turing machine code---that we employ to construct a program for
$\mathfrak{M}_{\varphi}$ that meets exactly these requirements.


\begin{lemma} Any $\varphi \in \mathcal{L}$ is provably-in-$\textsf{AX}$ (and -$\textsf{AX}^{\dagger}$)
 equivalent to a disjunction of conjunctive clauses, where each clause is of the form
\begin{equation} \pi \wedge \bigwedge_{i \in I} \big([\alpha_i] \bigvee_{j \in J_i} \beta_{j}\big) \wedge \bigwedge_{k \in K} \langle \alpha_k \rangle \beta_k \label{normal}
\end{equation}
and $\pi \in \mathcal{L}_{\mathrm{prop}}$ while $\beta_{j}, \beta_k \in \mathcal{L}_{\mathrm{int}}$ for all $j \in J_i$ for all $i \in I$ and for all $k \in K$.
We may assume without loss of generality that $\alpha_i \neq \alpha_{i'}$ for distinct $i,i' \in I$. \label{normal-form}
\end{lemma}
\begin{proof}
Note that provably in $\textsf{AX}$, $[\alpha] (\beta \land \gamma) \leftrightarrow [\alpha] \beta \land [\alpha] \gamma$ and $\langle \alpha \rangle (\beta \lor \gamma) \leftrightarrow \langle \alpha \rangle \beta \lor
\langle \alpha \rangle \gamma$. 
Use these equivalences and $\textsf{PC}$ and $\textsf{RW}$ to rewrite and get the result.
\end{proof}

Given a clause $\delta$ as in (\ref{normal}), let $\mathcal{S}_\delta \subset \mathcal{L}_{\mathrm{int}}$
be the set of $\mathcal{L}_{\mathrm{cond}}$-antecedents appearing in
$\delta$.
Each $\delta$ gives rise to a \emph{selection function} $f_\delta: \mathcal{S}_\delta \rightarrow \wp(\mathcal{L}_{\mathrm{int}})$ (cf. \cite{Stalnaker}), obtained (not uniquely) as follows. To give the value of $f_\delta(\alpha)$, suppose that $\alpha=\alpha_k$ for some $k\in K$. If $\alpha = \alpha_i$ for some $i \in I$, then $\alpha \wedge \beta_k \wedge \bigvee_{j \in J_i} \beta_j$ is consistent: otherwise, $[\alpha]\bigvee_{j \in J_i} \beta_j \land \langle \alpha \rangle \beta_k$
implies $\langle \alpha \rangle \bot$ which is $\mathsf{AX}$- (and $\mathsf{AX}^{\dagger}$) inconsistent.
Thus for some $j \in J_i$, $\alpha \wedge \beta_k \wedge \beta_j$ is also consistent.
In general $\alpha$ may be $\alpha_k$ for multiple $k \in K$. For each such $k$, we find such a $\beta_j$. We then set $f_\delta(\alpha)$ to the set of $\mathcal{L}_{\mathrm{int}}$-equivalents of the
$\alpha \wedge \beta_k \wedge \beta_j$, and set $f_\delta(\alpha)$ to the set of $\mathcal{L}_{\mathrm{int}}$-equivalents
of the $\alpha \wedge \beta_k$, if $\alpha \neq \alpha_i$ for any $i \in I$.
The remaining case is that $\alpha \in \mathcal{S}_\delta$ but $\alpha \neq \alpha_k$ for any $k\in K$;
in this case, set $f_{\delta}(\alpha) = \varnothing$.
 
\begin{lemma} If $\textsf{AX}^{\dagger}$ is $\textsf{AX}_{\mathrm{det}}$ or $\textsf{AX}^{\downarrow}_{\mathrm{det}}$ we can assume $f_{\delta}(\alpha)$ is a singleton (or possibly empty in the case of $\textsf{AX}_{\mathrm{det}}$). If $\textsf{AX}^{\dagger}$ is $\textsf{AX}^\downarrow$  or $\textsf{AX}^{\downarrow}_{\mathrm{det}}$ we can assume that $\varnothing \notin \mbox{range}(f_{\delta})$.
\label{sf-prop}
\end{lemma}
\begin{proof} In $\textsf{AX}_{\mathrm{det}}$, if $\langle \alpha \rangle \beta_1$ and $\langle \alpha \rangle \beta_2$, then because $[\alpha]\beta_1$ and $[\alpha]\beta_2$, and thus $[\alpha](\beta_1 \wedge \beta_2)$, we have $\langle \alpha \rangle (\beta_1 \wedge \beta_2)$.
In $\textsf{AX}^\downarrow$ it is always possible to assume that for each $i \in I$ there is some $j \in J_i$ such that $\langle \alpha_i \rangle \beta_j$ appears as a conjunct. So no such $\alpha_i$ will be sent to $\varnothing$.
\end{proof}
\begin{lemma} Let $\delta$ be a disjunct as in (\ref{normal}). Let $\mathfrak{M} \in \mathcal{M}$. Suppose that $\mathfrak{M} \models \pi$, and for all $\alpha \in \mathcal{S}_{\delta}$ that
$\mathfrak{M} \models \langle \alpha \rangle \beta$ for each $\beta \in f_{\delta}(\alpha)$,
that $\mathfrak{M} \models [\alpha] \bigvee_{\beta \in f_{\delta}(\alpha)} \beta$,
and that $\mathfrak{M} \models [\alpha]\bot$ whenever $f_\delta(\alpha)=\;\varnothing$. Then $\mathfrak{M} \models \delta$.
\label{norm-sf-sat}
\end{lemma}
\begin{proof}
We show that $\mathfrak{M}$ satisfies every conjunct in (\ref{normal}); satisfaction of $\pi$ is given.
For conjuncts $\langle \alpha_k \rangle \beta_k$, for $k \in K$, suppose first
that $\alpha_k \neq \alpha_i$, for any $i \in I$.
Then $f_{\delta}(\alpha_k) = \{ \alpha_{k} \land \beta_{k'} : k' \in K$ such that $\alpha_k = \alpha_{k'} \}$.
If $\mathfrak{M} \models \langle \alpha_k \rangle (\alpha_{k} \land \beta_{k'})$
then
$\mathfrak{M} \models \langle \alpha_{k'} \rangle \beta_{k'}$ for all such $k'$.
Thus suppose
$\alpha_k = \alpha_i$ for some $i \in I$. Again by the construction of
$f_{\delta}$,
we have
$f_{\delta}(\alpha_k) = \{ \alpha_k \land \beta_{k'} \land \beta_{j_{k'}} : k' \in K$ such that $\alpha_k = \alpha_{k'}\}$
for some $j_{k'}$ where each $j_{k'} \in J_i$.
Then $\mathfrak{M} \models \langle \alpha_k\rangle (\alpha_k \land \beta_{k'} \land \beta_{j_{k'}})$
implies $\mathfrak{M} \models \langle \alpha_k\rangle \beta_{k'}$ for each such $k'$.
To see that $\mathfrak{M} \models [\alpha_i] \bigvee_{j \in J_i} \beta_j$
for each $i$ such that $\alpha_i = \alpha_k$, by the assumption, we have $\mathfrak{M} \models [\alpha_k] \bigvee_{j \in J_i'} (\alpha_k \land \beta_{k'} \land \beta_j)$ for some $J_i' \subseteq J_i$. Generalizing the disjunction
to $J_i$ and distributing it through shows that $\mathfrak{M} \models [\alpha_k] \bigvee_{j \in J_i} \beta_j$.
Finally, for
conjuncts $[\alpha_i] \bigvee_{j \in J_i} \beta_j$ where
$\alpha_i \neq \alpha_k$ for any $k \in K$, we have $f_{\delta}(\alpha_i) = \varnothing$
so that $\mathfrak{M} \models [\alpha_i] \bot$. But then $\mathfrak{M} \models [\alpha_i] \beta$ for any $\beta$ whatsoever, so that such conjuncts are satisfied.
\end{proof}

\begin{definition}
Let $\mathsf{PL}$ be a programming language
whose programs are the instances of \textit{$\langle$prog$\rangle$}
in the following grammar: 
\begin{grammar}
    <const> ::= `0' | `1'  \quad\quad\quad <var> ::= $X_1$ | $X_2$ | $\dots$ | $X_n$ | $\dots$

    <cond> ::= <var> `=' <const> | <var> `=' <var> \alt <var> `!=' <var> | <cond> `&' <cond>

    <assign> ::= <var> `:=' <const> | <var> `:=' <var> | <var> `:= !' <var>
    
    <branches> ::= <prog> | <branches> `or' <branches>

    <prog> ::= `' | <assign> | <prog> `;' <prog> | `loop' 
    \alt `if' <cond> `then' <prog> `else' <prog> `end'
    \alt `choose' <branches> `end'
\end{grammar}
$\mathsf{PL}_{\mathrm{det}}$ will denote the same language except that $\mathsf{PL}_{\mathrm{det}}$ excludes
\texttt{choose}-statements, $\mathsf{PL}^{\downarrow}$ is identical but for excluding \texttt{loop}-statements, and $\mathsf{PL}^{\downarrow}_{\mathrm{det}}$
is identical but for excluding both
\texttt{choose}-
and \texttt{loop}-statements.
\label{pl}
\end{definition}

A program in any of these languages may be ``compiled'' to the right type of Turing machine in an obvious way (\texttt{loop} represents an unconditional infinite loop). 
For the remainder of the article, fix $\mathsf{PL}^{\dagger}$ to be the programming language
of Defn. \ref{pl} corresponding to the choice of $\mathcal{M}^{\dagger}$.

With the normal form result
and suitable languages in hand,
we proceed to construct the canonical model $\mathfrak{M}_{\varphi} = (\mathsf{T}_{\varphi}, \mathbf{x}_{\varphi})$
for $\varphi$.
$\mathfrak{M}_{\varphi}$ need only satisfy a consistent clause
$\delta$ as in (\ref{normal}).
Intuitively, $\mathfrak{M}_{\varphi}$ will satisfy $\mathcal{L}_{\mathrm{prop}}$-atoms in $\delta$
via a suitable tape state $\mathbf{x}_{\varphi}$ (existent as $\delta$ and \emph{a fortiori}
$\pi$ is consistent), and will
satisfy each $\mathcal{L}_{\mathrm{cond}}$-atom by dint of
a branch in
$\mathsf{T}_{\varphi}$,
conditional on the antecedent, in which the consequent is made to hold.
We now write the
$\mathsf{PL}^{\dagger}$-code of such a $\mathsf{T}_{\varphi}$.

Suppose we are given $\delta$, and that for each $\alpha \in \mathcal{S}_{\delta}$
 we have code $\texttt{HoldsFromIntervention}(\alpha)$ defining
a condition that is met iff
the program is currently
being run under an intervention that fixes $\alpha$ to be true.
Then consider a $\mathsf{PL}$-program $\mathsf{P}_{\varphi}$
that contains one $\texttt{if}$-statement for each
$\alpha \in \mathcal{S}_{\delta}$, each executing if $\texttt{HoldsFromIntervention}(\alpha)$
is met. In the body of the $\texttt{if}$-statement
for $\alpha$,
$\mathsf{P}_{\varphi}$ has
a $\texttt{choose}$-statement
with one branch for each $\beta \in f_{\delta}(\alpha)$.
The branch for each $\beta$
consists of
a sequence of assignment 
statements guaranteed to make $\beta$ hold,
call this $\texttt{MakeHold}(\beta)$,
clearly existent since each $\beta$ is satisfiable. 
If $f_{\delta}(\alpha)$ is a singleton, this body contains only $\texttt{MakeHold}(\beta)$;
if $f_{\delta}(\alpha) = \varnothing$, then this body
consists of a single $\texttt{loop}$-statement.
If $\mathsf{T}_{\varphi}$ is the machine
corresponding to $\mathsf{P}_{\varphi}$,
and $\mathbf{x}_{\varphi}$ is a tape state
satisfying $\pi$,
then
$\mathfrak{M}_{\varphi} \models \langle \alpha \rangle \beta$ for each
$\beta \in f_{\delta}(\alpha)$,
as the program has a halting branch with $\texttt{MakeHold}(\beta)$;
also, $\mathfrak{M}_{\varphi} \models [ \alpha ] \bigvee_{\beta \in f_{\delta}(\alpha)} \beta$
as there are no other halting executions.
If $f_{\delta}(\alpha) = \varnothing$, then
$\mathfrak{M}_{\varphi} \models [\alpha]\bot$,
since under an $\alpha$-fixing intervention
the program
reaches a $\texttt{loop}$-statement
and has no halting executions.
So by Lem. \ref{norm-sf-sat},
we have that
$\mathfrak{M}_{\varphi}$
satisfies $\delta$. And thus $\varphi$.
To see that $\mathfrak{M}_{\varphi} \in \mathcal{M}^{\dagger}$, apply
Lem. \ref{sf-prop}:
in $\textsf{AX}^{\downarrow}$, $\varnothing \notin \mbox{range}(f_{\delta})$
so we have no $\texttt{loop}$s in $\mathsf{P}_{\varphi}$ and $\mathfrak{M}_{\varphi} \in \mathcal{M}^{\downarrow}$.
In $\textsf{AX}_{\mathrm{det}}$,
we have no $\texttt{choose}$-statements, so $\mathfrak{M}_{\varphi} \in \mathcal{M}_{\mathrm{det}}$; in $\textsf{AX}_{\mathrm{det}}^{\downarrow}$, we have neither $\texttt{loop}$- nor $\texttt{choose}$-statements,
and $\mathfrak{M}_{\varphi} \in \mathcal{M}_{\mathrm{det}}^{\downarrow}$.

But how do we know it is possible to write code
$\texttt{HoldsFromIntervention}(\alpha)$ by which the program can  tell
whether it is being run under
an $\alpha$-fixing intervention?
For any tape variable, we may try to toggle it.
If the attempt succeeds, then the variable is not presently fixed by an intervention.
If not, then the present execution is under an intervention fixing the variable.
Thus, we first try to toggle each \emph{relevant} variable.
Let $N$ be the maximum index
$i$ of any atom $X_i$ appearing in $\varphi$.
Listing \ref{isintervened}---call it $\texttt{IsIntervened}(X_i)$---
performs the toggle check for $X_i$ and records the result in $X_{i + N}$. It uses $X_{i+2N}$
as a temporary variable and ultimately leaves the value of $X_i$ unchanged.
\begin{lstlisting}[escapeinside={(*}{*)},caption={$\texttt{IsIntervened}(X_i)$},label={isintervened}]
(*$X_{i+N}$*) := (*$X_{i}$*);
(*$X_{i}$*) := ! (*$X_{i}$*);
(*$X_{i+2N}$*) := (*$X_{i}$*);
if (*$X_{i+2N}$*) = (*$X_{i+N}$*) then (*$X_{i+N}$*) := 1
else (*$X_{i+N}$*) := 0 end;
(*$X_{i}$*) := ! (*$X_{i}$*);
\end{lstlisting}
If
$\texttt{IsIntervened}(X_i)$ has already been run
for all $1 \le i \le N$,
$\texttt{HoldsFromIntervention}(\alpha)$
simply checks that
exactly those variables appearing in $\alpha$ have been marked as intervened on,
and that these have the correct values.
If
$\alpha$ is the $\mathcal{L}_{\mathrm{int}}$-equivalent of
$\lnot X_{i_1} \land \dots  \land      \lnot X_{i_k} \land  X_{i_{k+1}} \land \dots \land X_{i_n}$,
code for $\texttt{HoldsFromIntervention}(\alpha)$ is given in Listing \ref{holdsfromintervention}.
\begin{lstlisting}[escapeinside={(*}{*)},caption={$\texttt{HoldsFromIntervention}(\alpha)$},label={holdsfromintervention}]
(*$X_{i_1}$*) = 0 & (*$\dots$*) & (*$X_{i_k}$*) = 0 &
    (*$X_{i_{k+1}}$*) = 1 & (*$\dots$*) & (*$X_{i_n}$*) = 1 &
(*$X_{i_1 + N}$*) = 1 & (*$\dots$*) & (*$X_{i_k  + N}$*) = 1 &
    (*$X_{i_{k+1}  + N}$*) = 1 & (*$\dots$*) & (*$X_{i_n  + N}$*) = 1
\end{lstlisting}
Completing the description of the code of $\mathsf{P}_{\varphi}$ adumbrated earlier, $\mathsf{P}_{\varphi}$ consists of, in order:
\begin{enumerate}
\item One copy of $\texttt{IsIntervened}(X_i)$ for
each $1 \le i \le N$.
\item For each $\alpha \in \mathcal{S}_{\delta}$, an $\texttt{if}$-statement
with condition $\texttt{HoldsFromIntervention}(\alpha)$, whose body is:
\begin{enumerate}
\item a $\texttt{choose}$-statement with a branch for each $\beta \in f_{\delta}(\alpha)$, with body
$\texttt{MakeHold}(\beta)$, if $|f_{\delta}(\alpha)| \ge 2$;
\item a $\texttt{MakeHold}(\beta)$-snippet, if $|f_{\delta}(\alpha)|  = 1$;
\item or a single $\texttt{loop}$-statement if $f_{\delta}(\alpha) = \varnothing$.
\end{enumerate}
\end{enumerate}
\end{proof}

Note that $\mathsf{P}_{\varphi}$ never reads or writes  
a variable $X_i$ for $i > 3N$, and the relevant
$\mathsf{PL}^{\dagger}$-operations may be implemented with bounded space,
so that we have the following Corollary:
\begin{corollary}
\label{fsm-corollary}
Let $\mathcal{M}^{\dagger, \mathrm{fin}}$ be the class of
\emph{finite state machine} restrictions of $\mathcal{M}^{\dagger}$,
i.e. those $(\mathsf{T}, \mathbf{x}) \in \mathcal{M}^{\dagger}$
where $\mathsf{T}$ uses only boundedly many tape variables, for any input and intervention.
Then Thm. \ref{ax-soundcomplete} holds also for $\mathcal{M}^{\dagger, \mathrm{fin}}$.
\qed
\end{corollary}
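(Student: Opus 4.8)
The plan is to observe that the corollary requires no new construction: the canonical model $\mathfrak{M}_{\varphi}$ built in the proof of Thm.~\ref{ax-soundcomplete} is already a finite state machine. First, soundness transfers for free in the easy direction: since $\mathcal{M}^{\dagger,\mathrm{fin}} \subseteq \mathcal{M}^{\dagger}$, any $\textsf{AX}^{\dagger}$-theorem, being valid throughout $\mathcal{M}^{\dagger}$, is valid throughout the subclass $\mathcal{M}^{\dagger,\mathrm{fin}}$. So the only nontrivial task is completeness, i.e.\ showing every $\textsf{AX}^{\dagger}$-consistent $\varphi$ has a model in $\mathcal{M}^{\dagger,\mathrm{fin}}$.

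For that I would re-examine $\mathsf{P}_{\varphi}$ as written. Every fragment appearing in it --- $\texttt{IsIntervened}(X_i)$, $\texttt{HoldsFromIntervention}(\alpha)$, and each $\texttt{MakeHold}(\beta)$ --- refers only to tape variables $X_i$ with $i \le 3N$, where $N$ is the largest atom index in $\varphi$; and the top-level program is a finite list of $\texttt{if}$-statements, one for each $\alpha \in \mathcal{S}_{\delta}$, each with a finite body. Crucially this variable bound does not depend on the input tape or on the intervention applied, since an intervention only fixes values of variables that the code already names and never causes the code to touch fresh cells. Compiling $\mathsf{P}_{\varphi}$ to a Turing machine, the finite program yields finitely many control states, and each $\mathsf{PL}^{\dagger}$ primitive (a bit toggle, a bit comparison, a constant assignment) is realized in a bounded amount of scratch space. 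Hence $\mathsf{T}_{\varphi}$ uses only boundedly many tape cells on every input and under every intervention, so $\mathfrak{M}_{\varphi} \in \mathcal{M}^{\dagger,\mathrm{fin}}$ --- the class-membership argument of Thm.~\ref{ax-soundcomplete} (using Lem.~\ref{sf-prop} to rule out \texttt{loop}s and/or \texttt{choose}s as appropriate) already lands us in the right one of $\mathcal{M}$, $\mathcal{M}_{\mathrm{det}}$, $\mathcal{M}^{\downarrow}$, $\mathcal{M}^{\downarrow}_{\mathrm{det}}$, and the bounded-space observation refines it to the finite-state version.

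The only point that needs care --- and the closest thing to an obstacle --- is the bounded-space claim for the compilation step: one should confirm that turning the $\mathsf{PL}^{\dagger}$ primitives into actual Turing-machine transitions does not require auxiliary workspace that grows with the input. It does not, because each primitive manipulates a fixed, small set of already-named cells, so a constant number of extra scratch cells suffices. Granting that, the corollary is immediate, since the witness exhibited for completeness was finite-state all along.
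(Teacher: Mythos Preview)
Your proposal is correct and matches the paper's own argument: the paper simply observes that $\mathsf{P}_{\varphi}$ never reads or writes any $X_i$ with $i>3N$ and that the $\mathsf{PL}^{\dagger}$ primitives are implementable in bounded space, so the canonical model already lies in $\mathcal{M}^{\dagger,\mathrm{fin}}$. Your write-up is in fact more detailed than the paper's one-line justification, but the underlying idea is identical.
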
 

\section{Computational Complexity}
In this section we consider the
problem $\textsc{Sim-Sat}(\varphi)$ of deciding whether a given $\varphi \in \mathcal{L}$
is
satisfiable in $\mathcal{M}^{\dagger}$.
Although by Prop. \ref{model-checking},
it is in general
undecidable whether a given \emph{particular} simulation model satisfies a formula,
we show
here that it is decidable whether a given formula is satisfied by \emph{any} model. In fact, reasoning in this framework is no harder than reasoning in propositional logic:
\begin{theorem}
$\textsc{Sim-Sat}(\varphi)$ is $\mathsf{NP}$-complete in $|\varphi |$ (where $|\varphi|$ is defined standardly).
\label{complexity}
\end{theorem}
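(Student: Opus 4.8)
The plan is to establish the two directions of $\mathsf{NP}$-completeness separately, hardness being immediate and membership being the real work. For hardness, propositional satisfiability reduces to $\textsc{Sim-Sat}$ in linear time: the trivial Turing machine $\mathsf{T}_0$ that halts at once writing nothing belongs to all four classes, and $(\mathsf{T}_0,\mathbf{x})\models p$ iff $\mathbf{x}\models p$ for $p\in\mathcal{L}_{\mathrm{prop}}$, so $p$ is classically satisfiable iff it is satisfiable in $\mathcal{M}^{\dagger}$. For membership, the idea is to reduce satisfiability to a purely propositional, ``local'' condition on the subformulas of $\varphi$ that a nondeterministic polynomial-time machine can guess and verify; note that the disjunctive normal form of Lemma~\ref{normal-form} cannot be used directly here, as it may be exponentially larger than $\varphi$.

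\emph{The local condition.} Let $V=\{\,i:X_i\text{ occurs in }\varphi\,\}$, let $[\alpha_1]\beta_1,\dots,[\alpha_m]\beta_m$ enumerate the subjunctive-conditional subformulas of $\varphi$ (each $\beta_s\in\mathcal{L}_{\mathrm{prop}}$, since the language has no nested conditionals), and put $\mathcal{A}=\{\alpha_1,\dots,\alpha_m\}$; all of $|V|,m,|\mathcal{A}|$ are at most $|\varphi|$. Given a Boolean assignment $\mu$ to the pseudo-atoms $\{X_i:i\in V\}\cup\{[\alpha_s]\beta_s:s\le m\}$ and an $\alpha\in\mathcal{A}$, set $T_\alpha=\{\beta_s:\alpha_s=\alpha,\ \mu([\alpha_s]\beta_s)=1\}$, $F_\alpha=\{\beta_s:\alpha_s=\alpha,\ \mu([\alpha_s]\beta_s)=0\}$, and $\chi_\alpha=\alpha\wedge\bigwedge_{\beta\in T_\alpha}\beta$, and call $\alpha$ \emph{$\dagger$-realizable under $\mu$} if: (i) $\chi_\alpha\wedge\lnot\gamma$ is propositionally satisfiable for every $\gamma\in F_\alpha$; (ii) if $\mathcal{M}^{\dagger}\in\{\mathcal{M}^{\downarrow},\mathcal{M}^{\downarrow}_{\mathrm{det}}\}$, then also $\chi_\alpha$ is satisfiable; and (iii) if $\mathcal{M}^{\dagger}\in\{\mathcal{M}_{\mathrm{det}},\mathcal{M}^{\downarrow}_{\mathrm{det}}\}$, then a single valuation witnesses all of (i)--(ii) at once (equivalently: either $F_\alpha=\varnothing$ and $\mathcal{M}^{\dagger}$ is non-halting, or $\chi_\alpha\wedge\bigwedge_{\gamma\in F_\alpha}\lnot\gamma$ is satisfiable). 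The claim is: $\varphi$ is satisfiable in $\mathcal{M}^{\dagger}$ iff there is an assignment $\mu$ that satisfies $\varphi$ read propositionally and under which every $\alpha\in\mathcal{A}$ is $\dagger$-realizable. Intuitively, $T_\alpha$ and $F_\alpha$ record which $[\alpha]$-conditionals $\mu$ makes true, and $\dagger$-realizability is exactly the condition for the set $O_\alpha\subseteq\{0,1\}^V$ of $V$-restrictions of outputs of halting runs of $\mathcal{I}_\alpha(\mathsf{T})$ to be arrangeable so as to induce those truth values while respecting the class: one may take $O_\alpha$ to be the set of witnesses from~(i), or $\varnothing$ when $F_\alpha=\varnothing$; determinism forces $|O_\alpha|\le1$, and the halting classes forbid $O_\alpha=\varnothing$.

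\emph{From the claim to the theorem and back.} Given the claim, $\textsc{Sim-Sat}\in\mathsf{NP}$: a certificate consists of $\mu$ together with the at most $|\varphi|$ witnessing $V$-valuations appearing in clauses (i)--(iii), each of size at most $|\varphi|$; one then checks in polynomial time that $\mu$ propositionally satisfies $\varphi$ and that each valuation satisfies the relevant ($\le|\varphi|$-size) propositional formula. (Equivalently, Tseitin-encoding the whole condition yields a polynomial many-one reduction to $\textsc{Sat}$.) For the ``only if'' direction of the claim, read $\mu$ off a satisfying model $(\mathsf{T},\mathbf{x})\in\mathcal{M}^{\dagger}$: for $\gamma\in F_\alpha$, the $V$-restriction of the output of a halting $\mathcal{I}_\alpha(\mathsf{T})$-run that falsifies $\gamma$ satisfies $\chi_\alpha\wedge\lnot\gamma$ (it satisfies each $\beta\in T_\alpha$ since $[\alpha]\beta$ holds, and satisfies $\alpha$ since the intervention pins those variables throughout), and determinism/halting of $\mathsf{T}$ deliver (ii)--(iii). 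For ``if'', build the witnessing model exactly as in the proof of Theorem~\ref{ax-soundcomplete}: a $\mathsf{PL}^{\dagger}$-program that first runs $\texttt{IsIntervened}(X_i)$ for each $i\in V$ and then, for each $\alpha\in\mathcal{A}$, has an \texttt{if} on $\texttt{HoldsFromIntervention}(\alpha)$ whose body \texttt{choose}s a witness $V$-valuation and writes it onto the tape (a single assignment block in the deterministic case; a \texttt{loop} when $O_\alpha$ is to be empty), run on a tape extending $\mu$'s values on $\{X_i:i\in V\}$ by zeros; it lies in $\mathcal{M}^{\dagger}$ precisely because clauses (ii)--(iii) matched the class, and it realizes each $[\alpha_s]\beta_s$ in accordance with $\mu$, hence satisfies $\varphi$.

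\emph{Main obstacle.} I expect the crux to be pinning down $\dagger$-realizability uniformly over the four classes --- in particular the interplay between the empty-output option $O_\alpha=\varnothing$ (available via a \texttt{loop} for $\mathcal{M}$ and $\mathcal{M}_{\mathrm{det}}$ but forbidden for the halting classes) and the single-output constraint $|O_\alpha|\le1$ forced by determinism --- and verifying that these per-antecedent choices can genuinely be assembled into one machine. The latter holds because $\texttt{HoldsFromIntervention}$ lets the program detect which intervention of $\mathcal{A}$ is in force, with at most one test firing per intervention, exactly as in the completeness proof, so the behaviour under different interventions is set independently; a secondary point to check is that, with no nested conditionals and antecedents from $\mathcal{L}_{\mathrm{int}}$, the truth value of each $[\alpha_s]\beta_s$ in the constructed model depends only on $O_{\alpha_s}$, so that the local condition really does capture satisfiability.
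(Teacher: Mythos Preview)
Your argument is correct and rests on the same core idea as the paper's---the small canonical model from the completeness theorem---but packages it differently. The paper defines a polynomial-size \emph{syntactic} class of programs $\mathsf{PL}^{\dagger}_{\varphi}$ (those having the shape of the canonical $\mathsf{P}_\varphi$), proves via a separate lemma that $\mathsf{P}_\varphi$ lies in it, and then guesses a program from this class and \emph{simulates} it on each antecedent occurring in $\varphi$ to verify. You instead extract the \emph{semantic content} of such a program---for each antecedent $\alpha$ a finite output set $O_\alpha$ over the relevant variables---and phrase satisfiability directly as the propositional $\dagger$-realizability condition on an assignment $\mu$ to the $\mathcal{L}$-atoms of $\varphi$. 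The two certificates are interconvertible in polynomial time: your pair $(\mu,\{O_\alpha\}_\alpha)$ is exactly what the paper's guessed program stores in its \texttt{if}-bodies, and your ``if'' direction rebuilds that very program. What your route buys is that you avoid defining and size-bounding the syntactic class (the paper needs a separate count that each $f_\delta(\alpha)$ has $\mathcal{O}(|\varphi|)$ elements, argued via a DNF clause of $\varphi$), and you get an immediate many-one reduction to \textsc{Sat} via Tseitin encoding; what the paper's route buys is a concrete guess-and-simulate algorithm and a clean standalone small-model statement (satisfiability over $\mathcal{M}^{\dagger}$ coincides with satisfiability over a finite $\mathcal{M}^{\dagger}_\varphi$).
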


\begin{proof}
We clearly have $\mathsf{NP}$-hardness
as propositional satisfiability can be embedded directly into $\mathcal{L}$-satisfiability. 
To see that satisfiability is $\mathsf{NP}$, we guess
a $\mathfrak{M}$ and
check whether $\mathfrak{M} \models \varphi$.
$\mathcal{M}^{\dagger}$ is infinite,
and the checking step is undecidable by Prop. \ref{model-checking}.
So how could such an algorithm work?
The crucial insight is that
we may limit our search to a finite class $\mathcal{M}^{\dagger}_{\varphi}$
of models that are similar to
the canonical $\mathfrak{M}_{\varphi}$ (Lem. \ref{tdagger}).
Moreover,
a nice property of the canonical $\mathsf{T}_{\varphi}$ is that it wears its
causal structure on its sleeves:
one can read off the effect of any intervention
from the code of $\mathsf{P}_{\varphi}$,
and $\mathsf{P}_{\varphi}$ has polynomial size in $|\varphi|$ (implied by Lem. \ref{plddagger}).
Models in $\mathcal{M}^{\dagger}_{\varphi}$ will share
this property, guaranteeing that
the checking step can be done in polynomial time.
We will now make
$\mathcal{M}^{\dagger}_{\varphi}$ precise and prove these claims.
Let $\mathcal{S}_{\varphi} \subset \mathcal{L}_{\mathrm{int}}$ denote
the set of
$\mathcal{L}_{\mathrm{cond}}$-antecedents appearing in
$\varphi$.
For $C \in \mathbb{N}$, define
$\mathsf{PL}^{\dagger}_{\varphi, C} \subset
\mathsf{PL}^{\dagger}$ as the fragment of programs whose code consists of:
\begin{enumerate}
\item One copy of $\texttt{IsIntervened}(X_i)$ (Listing \ref{isintervened}), for each $1 \le i \le N$,
followed by
\item at most one copy of an $\texttt{if}$-statement with condition $\texttt{HoldsFromIntervention}(\alpha)$ (Listing \ref{holdsfromintervention}) 
for each $\alpha \in \mathcal{S}_{\varphi}$, whose body is one and only one of the following options, (a)--(c):
\begin{enumerate} 
\item a $\texttt{choose}$-statement with at most $C |\varphi|$ branches,
each of which has a body consisting of a single sequence of assignments, which may only be to variables $X_i$ for $1 \le i \le N$;
\item a single sequence of assignment statements, only to variables $X_i$ for $1 \le i \le N$;
\item a single $\texttt{loop}$-statement.
\end{enumerate}
However,
if $\mathsf{PL}^{\dagger} = \mathsf{PL}_{\mathrm{det}}$, (a) is not allowed;
if $\mathsf{PL}^{\dagger} = \mathsf{PL}^{\downarrow}$, (c) is not allowed;
and if $\mathsf{PL}^{\dagger} = \mathsf{PL}^{\downarrow}_{\mathrm{det}}$, neither
(a) nor (c) is allowed.
\end{enumerate}

\begin{lemma}
\label{plddagger}
The maximum length (defined standardly) of a program in
$\mathsf{PL}^{\dagger}_{\varphi, C}$ is polynomial in $|\varphi |$,
and there is a $C$ such that  for all $\varphi$, we have $\mathsf{P}_{\varphi} \in \mathsf{PL}^{\dagger}_{\varphi, C}$, assuming $\mathsf{P}_{\varphi}$ exists.
\end{lemma}
\begin{proof}
$N$ is $\mathcal{O}(|\varphi |)$, so part 1 of a program
is $\mathcal{O}(|\varphi |)$ in length.
There are at most $|\mathcal{S}_{\varphi}|$ $\texttt{if}$-statements in part 2; consider
the body of
each one.
In case (a) it has $\mathcal{O}(|\varphi|)$ branches, each of which
involves assignment to at most $N$ variables, and thus
has length $\mathcal{O}(|\varphi|^2)$. In case (b) its length is
$\mathcal{O}(|\varphi|)$; in case (c) its length is $\mathcal{O}(1)$.
Since $|\mathcal{S}_{\varphi}|$ is $\mathcal{O}(|\varphi|)$, the total length of part 2 is $\mathcal{O}(|\varphi|^3)$, so that both parts combined are $\mathcal{O}(|\varphi|^3)$.
To show the existence of $C$, it suffices to prove:
any $\texttt{choose}$-statement in the body of an $\texttt{if}$-statement in
$\mathsf{P}_{\varphi}$ has
$\mathcal{O}(|\varphi|)$ branches.
Now, the number of branches in the $\texttt{if}$-statement for $\alpha$ is $|f_{\delta}(\alpha)|$,
for some consistent $\delta$ as in (\ref{normal}).
But (\ref{normal}) is a clause of the disjunctive normal form of $\varphi$ and contains no more
$\mathcal{L}$-literals than does $\varphi$, which is of course $\mathcal{O}(|\varphi|)$. 
Since each element of $f_{\delta}(\alpha)$ arises from the selection of a literal
in (\ref{normal}), the number of branches is $\mathcal{O}(|\varphi|)$.
\end{proof}
Henceforth let $\mathsf{PL}^{\dagger}_{\varphi}$ denote $\mathsf{PL}^{\dagger}_{\varphi, C}$
for some $C$ guaranteed by Lem. \ref{plddagger}, and
call the set of $\mathbf{x}$ where only tape variables $X_i$ with indices $1 \le i \le N$
are possibly nonzero $\mathcal{X}_N$.
Let $\mathcal{M}^{\dagger}_{\varphi}$ be the class of
models $(\mathsf{T}, \mathbf{x})$ where $\mathsf{T}$ comes from a $\mathsf{PL}^{\dagger}_{\varphi}$-program and $\mathbf{x} \in \mathcal{X}_N$.
$\mathcal{M}^{\dagger}_{\varphi}$ is finite, and the following Lemma guarantees that we may restrict the search to $\mathcal{M}^{\dagger}_{\varphi}$:
\begin{lemma}
\label{tdagger}
$\varphi$ is satisfiable
with respect to $\mathcal{M}^{\dagger}$ iff it is satisfiable with respect to
$\mathcal{M}^{\dagger}_{\varphi}$. 
\end{lemma}
\begin{proof}
If $\varphi$ is satisfiable in $\mathcal{M}^{\dagger}$, it
is $\textsf{AX}^{\dagger}$-consistent by soundness, and hence has a canonical
$(\mathsf{T}_{\varphi}, \mathbf{x}_{\varphi})$. Without loss of generality take
$\mathbf{x}_{\varphi}$ from Thm.
\ref{ax-soundcomplete} to be in $\mathcal{X}_N$.
Then by Lem. \ref{plddagger}, $(\mathsf{T}_{\varphi}, \mathbf{x}_{\varphi}) \in \mathcal{M}^{\dagger}_{\varphi}$, so $\varphi$ is satisfiable in $\mathcal{M}^{\dagger}_{\varphi}$.
\end{proof}

Now with Lem. \ref{tdagger} our algorithm will guess a program $\mathsf{P} \in \mathsf{PL}^{\dagger}_{\varphi}$
and a tape $\mathbf{x} \in \mathcal{X}_N$,
and verify whether the guessed model $\mathfrak{M} \in \mathcal{M}^{\dagger}_{\varphi}$ satisfies $\varphi$.
We just need to show that the verification step is decidable in polynomial time.
Suppose that all negations in
$\varphi$ appear only before $\mathcal{L}$-atoms, since any formula may
be converted to such a form in linear time. Further, rewrite
literals of the form $\lnot [\alpha] \beta$ to 
$\langle \alpha\rangle \lnot \beta$. 
Then it suffices to show that we can decide in polynomial time
whether $\mathfrak{M}$
satisfies a given literal in $\varphi$:
there are linearly many of these and the truth-value of $\varphi$ may be evaluated
from their values in linear time.
For an $X$-literal $X_i$
or $\lnot X_i$, we simply output whether or not $\mathbf{x} \models X_i$.
For $\mathcal{L}_{\mathrm{cond}}$-literals with antecedent $\alpha$,
simulate execution of
$\mathcal{I}_{\alpha}(\mathsf{T})$ 
on $\mathbf{x}$.
Because $\mathsf{P} \in \mathsf{PL}^{\dagger}_{\varphi}$
and such programs
trigger at most one $\texttt{HoldsFromIntervention}(\alpha)$
$\texttt{if}$-statement when run
under an intervention, we may
perform this simulation by checking if there is any
$\texttt{if}$-statement for $\alpha$ in $\mathsf{P}$. If so, do one of the following, depending on 
what its body contains:
\begin{enumerate}[label=(\alph*)]
\item If a $\texttt{choose}$-statement, simulate the result of running
each branch. Output true iff: either the literal was $[\alpha] \beta$ and every
resulting tape satisfies
$\beta$, or the literal was $\langle \alpha \rangle \beta$ and at least one resulting
tape satisfies $\beta$.
\item If an assignment sequence, simulate running it on the current tape,
and output true iff the resulting tape satisfies $\beta$.
\item If a $\texttt{loop}$, output true iff the literal is of the $[\alpha] \beta$ form.
\end{enumerate}
This
algorithm is correct since we thereby capture all halting executions, given
that $\mathsf{PL}^{\dagger}_{\varphi}$-programs conform to the fixed structure above. That it
runs in polynomial time follows from the polynomial-length bound of Lem. \ref{plddagger}.
\end{proof}

\section{Conclusion and Future Work} 

A very natural way to assess a claim, `if $\alpha$ were true, then $\beta$ would be true,' is to run a simulation in which $\alpha$ is assumed to hold and determine whether $\beta$ would then follow. Simulations can be built using any number of tools: (probabilistic) programming languages designed specifically for generative models, generative neural networks, and many others. Our formulation of \emph{intervention on a simulation program} is intended to subsume all such applications where conditional reasoning seems especially useful. We have shown that this general way of interpreting conditional claims has its own distinctive, and quite weak, logic. Due to the generality of the approach, we can validate further familiar axioms by restricting attention to smaller classes of programs (deterministic, always-halting). We believe this work represents an important initial step in providing a foundation for conditional reasoning in these increasingly common contexts.  

To close, we would like to mention several notable future directions. Perhaps the most obvious next step is to extend our treatment to richer languages, and in particular to the first order setting. This is pressing for several reasons. First, much of the motivation for many of the generative frameworks mentioned earlier was to go beyond the propositional setting characteristic of traditional graphical models, for example, to be able to handle unknown (numbers of) objects (see \cite{Poole,BLOG}). 

Second, much of the work in conditional logic in AI has dealt adequately with the first order setting by using frameworks based on normality orderings \cite{Delgrande,FriedmanKollerHalpern}. It is perhaps a strike against the structural equation approach that no one has shown how to extend it adequately to first order languages (though see \cite{Halpern2000} for partial suggestions). In the present setting, just as we have used a tape data structure to encode a propositional valuation, we could also use such data structures to encode first order models. The difficult question then becomes how to understand complex (i.e., arbitrary first-order) interventions. We have begun exploring this important extension.

Given the centrality of probabilistic reasoning for many of the aforementioned tools,
it is important to consider the probabilistic setting.
Adding explicit probability operators in the style of \cite{Fagin} results in a very natural
extension of the system \cite{Ib18}.
One could also use probability thresholds (see, e.g, \cite{Hawthorne}):
we might say $(\mathsf{T},\mathbf{x}) \models [\alpha]\beta$ just when
$\mathcal{I}_{\alpha}(\mathsf{T})$ results in output satisfying $\beta$ with at least some threshold
probability.

Finally, another direction is to consider additional subclasses of programs, even for the basic propositional setting we have studied here. For example, in some contexts it makes sense to assume that variables are time-indexed and that no variable depends on any variable at a later point in time (as in dynamic Bayesian networks \cite{Dean}). In this setting there are no cyclic dependencies, which means we do not have programs like that in Example \ref{ex1}. Understanding the logic of such classes would be worthwhile, especially for further comparison with central classes of structural equation models (such as the ``recursive'' models of \cite{Pearl2009}).

\section*{Acknowledgments}

Duligur Ibeling is supported by the Sudhakar and Sumithra Ravi Family Graduate Fellowship in the School of Engineering at Stanford University.
\bibliographystyle{named}
\bibliography{ijcai18}

\end{document}